\newif\ifFull
\newif\ifShort
\newif\ifArxiv
\newif\ifAppendix
\newcommand{\comment}[1]{}
\def\@begintheorem#1#2{\sl \trivlist \item[\hskip \labelsep{\bf #1\ #2:}]}
\def\@opargbegintheorem#1#2#3{\sl \trivlist
      \item[\hskip \labelsep{\bf #1\ #2\ #3:}]}
\newcommand{\parent}{{\mathsf{parent}}}
\newcommand{\size}{{\mathsf{size}}}
\newcommand{\height}{{\mathsf{height}}}
\newcommand{\true}{{\mathsf{true}}}
\newcommand{\false}{{\mathsf{false}}}
\newcommand{\issolid}{{\mathsf{isSolid}}}
\newcommand{\tag}{{\mathsf{tag}}}
\newcommand{\next}{{\mathsf{next}}}
\newcommand{\ssize}{{\mathsf{subsize}}}
\newcommand{\totalssize}{{\mathsf{total\_subsize}}}
\newcommand{\leftS}{{\mathsf{left\_spanning}}}
\newcommand{\rightS}{{\mathsf{right\_spanning}}}
\newcommand{\indeg}{{\mathsf{indegree}}}
\newcommand{\counter}{{\mathsf{counter}}}
\newcommand{\dummy}{{\mathsf{dummy}}}
\newcommand{\child}{{\mathsf{child}}}
\newcommand{\num}{{\mathsf{child\_num}}}
\newcommand{\pchild}{{\mathsf{parent\_outdeg}}}
\newcommand{\cchild}{{\mathsf{child\_outdeg}}}
\newcommand{\psize}{{\mathsf{parent\_area}}}
\newcommand{\csize}{{\mathsf{child\_area}}}
\newcommand{\direction}{{\mathsf{direction}}}
\newcommand{\up}{{\mathsf{up}}}
\newcommand{\down}{{\mathsf{down}}}
\newcommand{\getEdge}{{\mathsf{getEdge}}}
\newcommand{\troot}{{\mathsf{root}}}
\newcommand{\axis}{{\mathsf{axis}}}
\newcommand{\width}{{\mathsf{width}}}
\newcommand{\area}{{\mathsf{area}}}
\newcommand{\out}{{\mathsf{out}}}
\newcommand{\node}{{\mathsf{node}}}
\newcommand{\unit}{{\mathsf{unit}}}
\newcommand{\prevP}{{\mathsf{prevP}}}
\newcommand{\prevQ}{{\mathsf{prevQ}}}
\newcommand{\pspqtype}{{\mathsf{parent\_spq\_type}}}
\newcommand{\cspqtype}{{\mathsf{child\_spq\_type}}}
\newcommand{\source}{{\mathsf{source}}}
\newcommand{\sink}{{\mathsf{sink}}}
\newcommand{\refpoint}{{\mathsf{refpoint}}}
\newcommand{\level}{{\mathsf{level}}}
\begin{document}

\newcommand{\acks}{%
Research supported in part by the National Science Foundation
under grants
  0830149, % RT graph drawing
  0830403, % MTG graph drawing
  1228485, % RT security medium
  and 1228639 % MTG security medium 
  and by a NetApp Faculty Fellowship. 
  We would like to thank Giuseppe Di Battista for useful discussions.
}

\ifArxiv
\title{Data-Oblivious Graph Drawing Model and Algorithms}
\else
\title{Graph Drawing in the Cloud: \\
   Privately Visualizing Relational Data \\
   using Small Working Storage%
\ifShort
\thanks{\acks}
\fi
}
\fi

\author{Michael T.~Goodrich\inst{1}
\and
Olga Ohrimenko\inst{2}
\and 
Roberto Tamassia\inst{2}
}

\institute{Dept. Computer Science, Univ. of California, Irvine.
        \email{goodrich@acm.org}
       \and
Dept. Computer Science, Brown University. \email{\{olya,rt\}@cs.brown.edu}
        }

\ifFull\else
\pagestyle{empty}
\fi
\maketitle 

\begin{abstract}
We study graph drawing in a 
cloud-computing context where data is stored externally
and processed using a small local working storage.
We show that a number of classic graph drawing algorithms can be 
efficiently implemented in such a framework where 
the client can maintain privacy while constructing a drawing of her graph.
\end{abstract}

%%% Local Variables: 
%%% mode: latex
%%% TeX-master: "paper"
%%% End: 

\section{Introduction}
\label{sec:intro}
\ifFull
Technologies developed under the
paradigm of \emph{cloud computing}
enable users to access their data seamlessly across devices of vastly
different computational power.  Moreover, these technologies support
computations on a large data set using a small device whose storage
capacity is insufficient to simultaneously hold all the data, since
the data is held in its entirety on an external server. 
\fi

In this
paper, we present techniques that allow a client to efficiently
execute various classic graph drawing algorithms, and variations of
them, in a cloud computing environment, where the storage of the
graph is outsourced to an online storage service.

We are particularly interested in allowing a client to access her data
and perform computations on them in a \emph{privacy-preserving}
way. For example, an administrator for a fast-growing company may be
revising (and visualizing) the organizational chart for the leadership
of her company, and leaking this chart to the press or a rival could
negatively impact the company.  Thus, we view the storage server as an
\emph{honest-but-curious} adversary, who correctly performs the storage
and retrieval operations requested by the client, but is nevertheless
interested in learning as much from her data as possible (indeed, some
cloud computing companies are basing their business model on this
goal).

Of course, in a cloud computing scenario, the client would encrypt the
data she outsources, decrypting it when she retrieves, it and
re-encrypting it when she stores it back (using a probabilistic cipher
that is unlikely to repeat the same cipher text for a re-encryption of
the same plaintext).  But she may also be leaking information to the
server from the pattern of her data accesses to the storage server. 
For example, accessing the memory associated with a certain department while 
preparing a new organizational chart leaks the fact that that department 
is being reorganized.
So the client should additionally aim at completely hiding her access
patterns in order to achieve privacy protection for her data.

\ifFull
\subsection{Oblivious  Algorithms and Storage}
\else
\paragraph{\bf Oblivious  Algorithms and Storage}
\fi
The general techniques of \emph{oblivious RAM
  simulation} and \emph{oblivious storage} allow a client to simulate
an arbitrary algorithm in such a cloud-computing environment so as to
hide both the content and access patterns for her computation 
\ifFull
(e.g., see~\cite{a-orwca-10,bmp-rosmor-11,dmn-psor-10,go-spsor-96,%
  gm-paodor-11,gmot-orsew-11,gmot-pos-12,gmot-ppgdasors-12}).  
\else
(e.g., see~\cite{gm-paodor-11,gmot-orsew-11,gmot-pos-12,gmot-ppgdasors-12}).  
\fi
But
these solutions involve fairly complicated simulation techniques for
generic algorithms that increase the running time of the client's
algorithm by a polylogarithmic factor when the client has a small
amount of working storage.

Privacy-preserving algorithms in the cloud computing scenario with no
asymptotic time overhead have been developed for sorting
\cite{g-rsaso-10} and for fundamental computational geometry problems
on planar point sets, including convex hull, well-separated pair decomposition,
compressed quadtree construction, closest pairs, and all nearest
neighbors~\cite{egt-ppdo-10}. These algorithms also hide the access pattern from the server and are referred to as \emph{data-oblivious}.

In this paper, we develop simple privacy-preserving algorithms for
some classic graph drawing problems that fully obfuscate the access
pattern from the data server. Our algorithms are provably data-oblivious and
utilize small working storage.
\ifFull
\subsection{Related  Work}
\else
\paragraph{\bf Related  Work}
\fi
There are existing
web-based systems that can perform graph drawing services for clients,
such as the Brown Graph Server~\cite{bgt-gdtsw-99} and
Grappa~\cite{bmw-grappa-97}.
These differ from the framework we are describing in this paper in two ways.
First, our model involves the client storing her
data in an outsourced data
server and accessing that data remotely, whereas the web-based graph drawing
services involve a client storing her data locally
and temporarily shipping it to the server.
Second, in the framework we are describing here, the client
performs the graph drawing algorithm herself, not the server 
(because of
privacy concerns), whereas the web-based drawing services employ their own graph
drawing algorithms to produce layouts for the client.

\ifFull
Abello and Korn~\cite{ak-mgv-02} describe a system of 
clustering and hierarchical 
representations for visualizing large graphs that do not fit in main memory.
In addition, Abello {\it et al.}~\cite{Abello:1999} discuss at a high
level, in a SIGGRAPH column, some of the challenges
of dealing with the visualization of large graphs with small working storage.
Likewise, there is a large body of work on external-memory graph algorithms, 
where one solves a problem that is too big to fit in main memory
by dividing it into blocks and accessing data in way that takes advantages of
localities of reference (e.g., see~\cite{Chiang:1995,Vitter:2001}).
The present paper can be viewed as an approach for dealing with 
the challenge of drawing large graphs with small local memory, but
in a different way than using clustering or external memory approaches.
In particular, our approach involves accessing items individually, not in
blocks, and it involves computing exact drawings, not approximate, clustered,
or hierarchical drawings.
\fi

Our approach is probably most similar to prior work on computations on data
streams (e.g., 
see~\cite{Babcock:2002,raghavan1999computing,muthukrishnan2005data}).
In this model, data is presented in single stream, which arrives in an
arbitrary order and is processed in an online, read-only fashion
using a workspace of small size.
Each time an item is considered, all the 
processing involving that item has to be
completed before considering the next item.
Henzinger {\it et al.}~\cite{raghavan1999computing} 
introduce a version of this
model that allows for a small number of passes over the data
using a small workspace, but 
their approach still assumes that data is presented in 
a read-only fashion in an
arbitrary order (although they do leave as an open problem whether allowing for
alternative orderings can reduce workspace memory size in some cases).
In addition, Feldman {\it et al.}~\cite{fmsss-odssc-10} define the MUD model for 
describing MapReduce algorithms, which also involves scans and small local memory, 
but in their model scans are only over small local
memories rather than a large set of data.

In the context of graph drawing,
Binucci {\it et al.}~\cite{bbddgppsz-10} describe a framework for drawing trees
in the streaming model, where one draws trees using a single scan of the edges,
using a framework that is similar to our approach but nonetheless has some
important differences.
Specifically,
as in the traditional data streaming model,
their approach only allows for
a single scan of the edges of a tree in an order that is not 
under the control of the algorithm.
In our case, the client can make multiple scans of her data and
specify the ordering of the scan each time.
In addition, in their model, once a node is placed it cannot be moved, 
whereas we
allow for the client to make tentative assignments of coordinates in one
scan that can be refined or changed in a future scan,
since this more naturally fits the approach of cloud computing.

\ifFull
\subsection{Our Results}
\else
\paragraph{\bf Our Results}
\fi
To enable data-oblivious algorithms for graph drawing problems, we
introduce \emph{compressed scanning}, an algorithmic design framework based on a series of scans.  Our method
is related to the massive,
unordered, distributed (MUD) model \cite{fmsss-odssc-10} for efficient
computation in the map-reduce framework. 
We assume that the server holds a set of $n$ data items and the client has a small private working storage of size $O(\log n)$. The data items at the server are encrypted with a semantically secure (probabilistic) cipher so that it is hard for the server to determine whether two items are equal.

An algorithm for the compressed-scanning model consists of a
sequence of rounds, where in each round the entire data set is scanned in some order specified by the client. During the scan, each item
is processed exactly once by the client: first the client downloads
the item from the server into working storage; 
next, the client performs some internal-memory computation on the item and the content of the working storage; finally the item is written out to
an output stream at the server.
When a round is completed, the output stream is
either confirmed as the algorithm's output or it is used as the input
data set for the next round.  The efficiency of such an algorithm is
measured, therefore, by the number of rounds needed and the size of
the local working storage that is required.  Ideally, the number of
rounds should be $O(1)$ and the working storage should be logarithmic
or polylogarithmic in size.
As shown in Section
\ifFull
\ref{sec:shuffle},
\else
\ref{sec:model}
\fi
an algorithm
designed in the compressed scanning framework can be implemented in a
data-oblivious way by randomly shuffling the items in between scans.

Using the compressed-scanning approach, we provide efficient
data-oblivious algorithms for a number of classic graph drawing
methods \cite{dett-gd-99}, including symmetric straight-line drawings and
treemap \cite{js-tmsat-91} drawings of trees, dominance drawings of
planar acyclic digraphs~\cite{dtt-arsdp-92}, and $\Delta$-drawings of
series-parallel graphs~\cite{bcdtt-hdspd-94}.  Our methods result in
privacy-preserving graph drawing algorithms whose running times are
asymptotically optimal and better than could be achieved by applying
general-purpose privacy-preserving techniques \ifFull (e.g.,
see~\cite{a-orwca-10,bmp-rosmor-11,dmn-psor-10,go-spsor-96,%
  gm-paodor-11,gmot-orsew-11,gmot-pos-12,gmot-ppgdasors-12}).  \else
(e.g.,
see~\cite{gm-paodor-11,gmot-orsew-11,gmot-pos-12,gmot-ppgdasors-12}).
\fi

%%% Local Variables: 
%%% mode: latex
%%% TeX-master: "paper"
%%% End: 

\section{Compressed-Scanning}
\label{sec:model}
In this section, we formally define
the \emph{compressed-scanning} model
for designing client-server algorithms that can be efficiently implemented using a small working storage, $W$, at the client.
We assume that the server holds an array, $S$, of $n$ data elements.

\ifFull
\subsection{Model}
\else
\paragraph{\bf Model}
\fi
An algorithm for our model consists of a sequence of $t$ rounds.
A round involves accessing each of the elements of $S$ 
exactly once in a read-compute-write operation.
This operation consists of reading an element from the server
into private working storage, using the element
in some computation, and writing a new element
to an output stream, $O$, at the server.
When a round completes, either the output stream $O$ and/or a set of values in $W$ are confirmed as the 
output of the algorithm, or we assign $S=O$ and start the next round.

This size of the working storage, $W$, is a parameter of our model, and is intended to be small (e.g., constant or $O(\log n)$).
The name of our model is derived from the fact that each round scans the set
$S$ and computations are performed using a small, or ``compressed'', amount
of working storage.
\ifFull%
Simple examples of algorithms that fit our model include the trivial
methods for summing $n$ integers in an array or traversing a linked
list from beginning to end, which can be done with a constant-size
working storage, or any algorithm in the standard data streaming
model, which would have $W$ being equal to the working storage for
that algorithm.
\else %
Note that our compressed-scanning model generalizes the standard data
streaming model.  \fi

\ifFull
\subsection{Privacy Protection}
\label{sec:shuffle}
\else
\paragraph{\bf Privacy Protection}
\fi
Suppose we are given a compressed-scanning algorithm, $A$, which runs 
in $t$ rounds
using a working storage, $W$, and a data set, $S$, of size $n$.
We can implement $A$ in a privacy-preserving way as follows.

The first essential step in ensuring privacy is the encryption of the elements in~$S$.
From now on we assume that the input stream, $S$, is stored encrypted at the server and whenever we write
elements to the output stream, $O$, we also encrypt them.
We use semantically secure encryption~\cite{goldre04b}, which takes as input the plaintext
and a random value. Thus, if the same
element is encrypted twice, the resulting ciphertexts are different.
\ifFull
This is useful when we read an element, decrypt it, possibly modify it,
re-encrypt and write it back.
\fi
With semantically secure encryption the server
will not be able to distinguish whether two data elements are equal or whether
 the output element
of a read-compute-write operation is equal to the input element.

The next step in ensuring privacy is hiding the access pattern from
the server.  For each round, $i$, of $A$, we use a new pseudo-random
permutation~\cite{Gold01}, $\pi$, to assign a random integer, $\pi(x)$ between $1$
and $n$, to each element, $x$, in~$S$.  Value $\pi(x)$ is stored
encrypted next to item~$x$.
%For each round we pick a new random
%secret key which determines $\pi$. Using this key and $x$ we can
%efficiently compute $\pi(x)$.
We then perform a random shuffle (e.g.,
using an oblivious sort~\cite{g-rsaso-10}) to move each element $x$ to location
$\pi(x)$ so that the server cannot figure out where each element was
moved to. (Recall that the elements are reencrypted each time they are
accessed.) This step takes $O(n\log n)$ time.
We now put the elements in a lookup table using the $\pi(x)$ values 
as keys, since the adversary will have no way of correlating these values to the original
locations of the elements. 

Finally, we simulate round $i$, where we use $\pi(x)$ to do the lookup for 
element~$x$.
Since each element in $S$ is 
accessed exactly once in the round, each lookup is independent
and random;
hence, it cannot be correlated with previous or subsequent lookups.
For each lookup, we
do any necessary
local computation, and then write an element to our output stream.
Even if we have nothing to output, we can always write a dummy
element, for the sake of being oblivious.

In conclusion, we simulate each round of algorithm $A$ in $O(n\log
n)$ time while fully hiding the pattern of access to the items in~$S$.
Thus, the simulation of $A$ takes time $O(t n \log
n)$ and uses working space of size proportional to that of~$A$.

\begin{definition}
  A probabilistic algorithm $A$ is \emph{data-oblivious} if given two
  inputs of the same size, $I_1$ and $I_2$, the accesses that $A$
  makes to the memory for $I_1$ and $I_2$ have the same probability
  distribution.
\label{def:obl}
\end{definition}
In other words, one cannot distinguish between $I_1$ and $I_2$ by just
looking at their access patterns.  For example, consider an algorithm
that scans the elements of a sorted array and writes to
the output stream, $O$, only distinct elements.  This algorithm is not
data-oblivious since, given inputs $(1,1,1,2)$ and $(1,2,2,2)$,
the write accesses to $O$ happen after a different number of read
accesses are made to the input stream.  A data-oblivious algorithm
would write a value to $O$ for every element it reads from the input:
a dummy element if the same element as the previous one is read, and a
real one, otherwise. One can then make a simple sorting pass over $O$
to bring real items to the front of the list.  A workspace of constant size
is used to store the last read element.
\begin{theorem} \label{thm:compressed-oblivious} %
  Let $A$ be an algorithm in the compressed-scanning model for an input
  of size $n$ that uses a working space of size $k$. Algorithm $A$ can
  be simulated by a data-oblivious algorithm if the number of rounds
  and the number of elements written to the output stream at each
  round depend only on~$n$. Also, the simulation uses a working space
  of size $O(k)$ and runs in time $O(T(n) n \log n)$, where $T(n)$ is
  the running time of~$A$.
\end{theorem}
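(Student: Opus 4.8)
The plan is to verify that the privacy-preserving simulation already described in Section~\ref{sec:model} satisfies both the obliviousness requirement of Definition~\ref{def:obl} and the stated resource bounds, with the hypothesis on the round count and the per-round output size being exactly the ingredient that closes the argument. Write $t = t(n)$ for the number of rounds and $m_i = m_i(n)$ for the number of elements written to the output stream in round $i$; by assumption each of these is a function of $n$ alone. Because $S$ is reset to $O$ between rounds, this means the sequence of data-set sizes $(m_0, m_1, \dots, m_t)$ with $m_0 = n$ that the simulation scans (round $i$ scanning $m_{i-1}$ items) is determined by $n$ and is independent of the content of the input.

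First I would analyze the access pattern of a single round in isolation. At the start of round $i$ the client attaches a fresh random permutation value $\pi(x)$ to each of the $m_{i-1}$ stored items and performs an oblivious-sort shuffle~\cite{g-rsaso-10} that relocates each item to slot $\pi(x)$; since $\pi$ is drawn independently of the item contents, the resulting physical layout, and hence the order in which the client subsequently touches the slots, is a random permutation whose distribution depends only on $m_{i-1}$ (exactly uniform if $\pi$ is a truly random permutation). Each item is read, processed with the $O(k)$-size working storage, and followed by exactly one write to $O$ --- a real element or, when $A$ would emit nothing, a dummy --- so the write pattern is likewise fixed once $m_{i-1}$ is fixed. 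Semantic security of the cipher then guarantees that the ciphertexts revealed during these reads and writes carry no information distinguishing one input from another.

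Next I would compose the rounds. Since $t$ and every $m_i$ depend only on $n$, the boundaries between scans, the length of each scan, and the total number of scans are identical for any two inputs $I_1, I_2$ of size $n$; within each scan the access distribution is the input-independent permutation established above, and the fresh randomness used per round keeps successive scans uncorrelated. Concatenating, the full access pattern has the same distribution for $I_1$ and $I_2$, which is precisely Definition~\ref{def:obl}. For the cost, the shuffle overhead is $O(n \log n)$ per round and can be realized with only $O(1)$ extra working storage, so the space stays $O(k)$; there are $t \le T(n)$ rounds and the underlying round computations total $O(T(n))$, so the overall running time is $O(T(n)\, n \log n)$.

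The main obstacle is the inter-round dependency created by feeding $O$ back in as $S$: the analysis of a single shuffled scan is routine, but one must argue that nothing leaks through the \emph{shape} of the multi-round computation --- the number of scans and the number of real items surviving each scan. This is exactly where the hypothesis is indispensable, since without it a content-dependent output size would make the next scan's length observable to the server. A secondary subtlety is the gap between the information-theoretic phrasing of Definition~\ref{def:obl} and the use of a pseudo-random permutation: to obtain equality of distributions rather than mere computational indistinguishability, one should take the shuffle to be a truly random permutation (e.g., oblivious sort on fresh random keys), which is available within the claimed bounds.
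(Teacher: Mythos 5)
Your proposal is correct and follows essentially the same route as the paper's proof sketch: shuffle the data set with an oblivious sort before each round so that the within-round accesses form an input-independent random permutation, emit a (possibly dummy) write on every read, and invoke the hypothesis on the round count and per-round output sizes to make the multi-round structure depend only on $n$. Your additional observations --- explicitly tracking the sequence of data-set sizes across rounds and noting that a truly random shuffle (rather than a pseudo-random permutation) is needed for exact distributional equality in Definition~\ref{def:obl} --- are refinements of, not departures from, the paper's argument.
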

\begin{proof} (\emph{Sketch})
  Each round is simulated by reading elements from $S$, writing elements
  to $O$, and reshuffling the next input set.  Accesses to locations
  in $S$ are made only once in a random order.  This ensures that
  accesses to $S$ in a single round are data-oblivious.  Write
  accesses to $O$ are also data-oblivious, since they happen on every
  access to $S$.  After every round, the input sequence is reshuffled
  (data-obliviously); hence, one cannot correlate accesses between
  rounds as well.  Thus, accesses to $S$ and $O$ depend only on size
  of $S$ while the number of rounds is fixed by the algorithm
  regardless of~$S$. \qed
\end{proof}
In the next section we describe graph drawing algorithms that fit the
compressed-scanning model and, hence, can be implemented in a
data-oblivious manner.
\ifFull
These algorithms guarantee that their access patterns do not reveal
the combinatorial structure of the graphs that are given as inputs
(e.g., number of outgoing or incoming edges for a particular node) and
run in a constant number of rounds using $W$ of logarithmic size.  \fi

%%% Local Variables: 
%%% mode: latex
%%% TeX-master: "paper"
%%% End: 

\section{Graph Drawing Algorithms}
\label{sec:gd_algs}
Most existing graph drawing algorithms are
designed without privacy concerns in mind; hence, if they are run
in a cloud-computing environment,
they can reveal potentially sensitive information from their access patterns.
For example, a recursive binary-tree drawing
algorithm implemented in the standard way can reveal the 
depth of the tree from the access patterns used for the recursion stack,
even if all the nodes in the tree are encrypted.
In this section, we present several graph drawing algorithms modified to
fit the compressed-scanning model. 
\ifFull
In order to build a graph drawing algorithm that fits this model, we
\else
We
\fi %
modify the representation of the graph so that we never access
the same location more than once in the same round. For example, consider a tree represented with a set of nodes and pointers 
from each node to its children and a parent.
Traversing the tree in this case involves accessing an internal node several times depending on its degree, which
reveals information about the tree.

\ifFull
\subsection{Euler Tours in the Compressed-Scanning Model}
\label{sec:euler_traversal}
\else
\paragraph{\bf Euler Tours in the Compressed-Scanning Model}
\fi
% The algorithms we present in this section, 
% indeed, represent a graph as a tree and then carry
% out some computation by traversing it.
Traversing a tree in the compressed-scanning model requires that
we access each memory location exactly once; hence, we need to 
reorganize how we normally perform data accesses, since, for example, we cannot 
access a parent again 
when coming from its left child after we have already visited it and its 
right child.
Given our small private workspace, $W$,
we cannot store previously accessed nodes.
Thus, we need a representation of a tree that allows for a 
traversal where elements are accessed only once.
For this purpose, 
we construct an Euler tour over a tree that is 
based on duplicating edges and defines a left to right traversal of a tree.
Each copy of an edge contains a pointer to a copy of the next edge in the tour so we can go to the next edge without using recursion and visiting each edge of the tour only once.

For an ordered tree, $T = (V,E)$, we store an Euler tour as a set 
of items, $C$, where $|C| = 2|E|$.
Each item represents an edge of the tour and stores information related
to the tree, e.g., $\parent$, $\child$ node names, 
and the order of the child among all its siblings.
Additionally, it stores information related to the 
actual cycle of the Euler
tour:
(a) $\tag$: a unique tag associated with this item, $0 \le \tag < 2|E|$. This is used to locate and permute
items.
(b) $\direction$: $\up$ or $\down$. This indicates 
which direction in the tree we are following.
(c) $\next$: $\tag$ of the next edge in the cycle.

We assume that $\tag=0$ for the leftmost edge of the root
of~$T$.  Suppose we shuffle the items in $C$ using a permutation, $\pi$,
over the $\tag$ field.  Then a traversal of $C$ starts with an access
to location $\pi(0)$, following access to $\pi(\pi(0).\next)$. The
items in $C$ are accessed only once and the tree layout is hidden
behind the permutation~$\pi$.  Thus, the traversal is data-oblivious
and reveals only the number of edges and nodes in the tree.

\ifFull
\subsection{Computation over Euler Tour Representations}
\else
\paragraph{\bf Computation over Euler Tour Representations}
\fi
\label{sec:euler_ssize}
Many graph drawing algorithms collect information from a tree representation
of the graph to determine the layout. 
\ifFull Such information could be the height, width, or subtree size
of each node of the tree.  \fi
We now show how one can use an Euler tour representation of a rooted
tree to compute for each node of the tree, the size (number of nodes)
of it subtree in a data-oblivious manner.

For this computation, 
we add a new field $\ssize$ for every edge in the Euler tour $C$.
The algorithm maintains in local memory, $W$, a variable, $\totalssize$, initially
set to 0. Edges in $C$ are traversed as described in the previous section.
However, every time we now read an edge, $i$, we update $i.\ssize$ with 
the value stored at $\totalssize$ and write it back.
When we are going up, i.e., $i.\direction = \up$, 
$\totalssize$ is incremented by 1.
Once the traversal finishes, we observe that for every two items, $i$ and $i'$,
that represent a traversal of the same edge, i.e., $i.\parent = i'.\parent$,
$i.\child = i'.\child$, $i.\direction = down$ and  $i'.\direction = up $,
the value $(i'.\ssize - i.\ssize)$ is the size of the subtree rooted at $i.\child$
and the final value of $\totalssize$ is $\ssize$ of the root.
However, we need to associate nodes of the tree $T$ with these values
in the compressed-scanning model as well.
For this purpose, we obliviously sort the values in $C$ using 
the fields, $\parent$ and $\child$,
to bring items that correspond to the same edge next to each other.
We then simply scan the resulting sorted list and after reading a pair of
items, $i$ and $i'$, output a pair $(i.\child, i'.\ssize - i.\ssize)$.
\ifFull
\ifArxiv
(See Figure~\ref{fig:euler_ssize}.)
\begin{figure}[ht]
	\centering
         \includegraphics[scale=0.5]{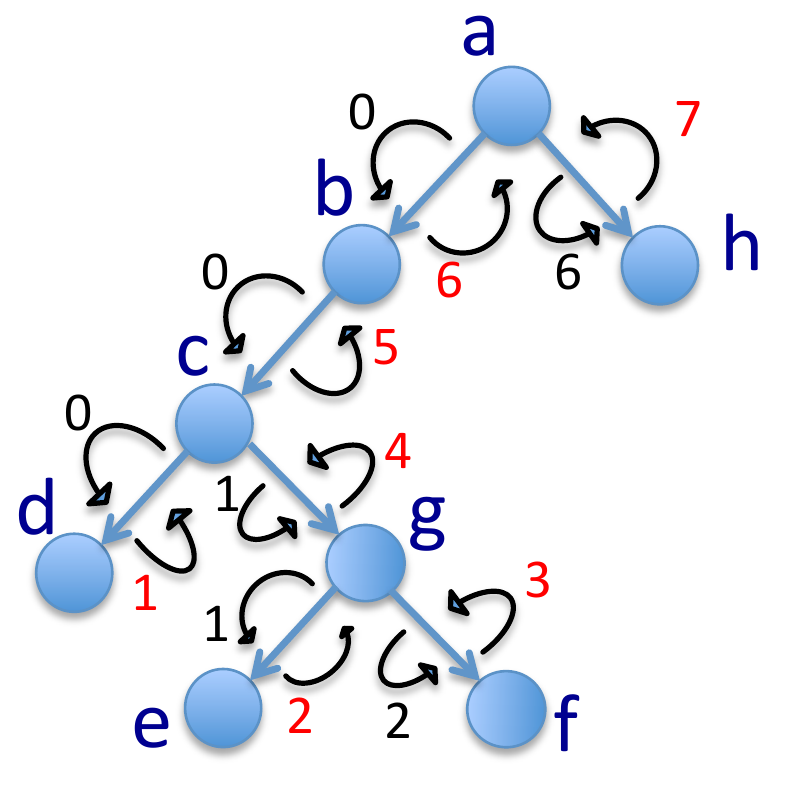}
        	\caption[]{Computing the size of the subgraph via Euler tour.
	During the tour a locally maintained variable $\totalssize$ is incremented
	when the tour goes up the tree (red numbers in the figure)
	and is assigned to currently visited edge.
	The size of the node's
	subgraph is the counter at the edge going up from this node minus the counter of the
	duplicate of this edge. For example, the size for g is (4-1) = 3.}
	\label{fig:euler_ssize}
\end{figure}
\else
\ifAppendix
(See Figure~\ref{fig:euler_ssize} in the Appendix.)
\fi
\fi
\fi

The above computation consists of two rounds: the first round reads
one item of $C$ at a time, modifies it and writes it back. The second
round starts after the sorting is complete, where items are read one at a time and
a new item is written to the output after every two reads.
We can compute the depth of each node using a similar technique.

\ifFull
\subsection{Drawing of Planar Acyclic Digraphs}
\else
\paragraph{\bf Drawing of Planar Acyclic Digraphs}
\fi
We adopt an algorithm for dominance
drawings of planar acyclic digraphs from~\cite{dtt-arsdp-92}\ifFull,
which is simple and elegant but is not data-oblivious\fi.
To find the $x$-coordinate of each node, one builds
a spanning tree based on leftmost incoming edges of the nodes
and then traverses
this tree from left to right, numbering each node in this order.
The resulting numbering of each node is its $x$-coordinate.
The algorithm to determine the $y$-coordinates uses
the rightmost spanning tree.
%We now describe how one can build a spanning tree using our model.

\ifFull \textbf{Input}: \fi We assume that the graph, $G$, is given
as a set of edges, $E$, where $e \in E$ is an edge directed from
node $a$ to $b$ storing $\indeg$, the number of incoming edges to $b$,
and $\num$, the order of $a$ among all incoming edges to $b$;
the leftmost edge has order 0.

\ifFull \textbf{Data-oblivious algorithm}: \fi Following the original algorithm,
we show how one can construct a spanning tree and number the nodes
to get the final drawing.
Our first task is to augment each edge with
information about a spanning tree of $G$. We augment $e$ with additional 
fields, $\leftS$ and $\rightS$, which are set to $\true$ or $\false$ depending
on which spanning tree $e$ belongs to. In the compressed-scanning model,
one simply accesses $e$, sets $e.\leftS$ to $\true$ if $e.\indeg$ equals $e.\num$
or $e.\rightS$ to $\true$ if $e.\num$ is 1, and writes $e$ back. 

Given annotated edges, we construct an Euler tour over each spanning tree.
Note that given that the number of nodes in $G$ is revealed, we do not need
to hide the number of edges in either of the spanning trees.
For ease of explanation, we say that we traverse an edge $\down$ when
we follow an edge of the spanning tree in its direction in $G$. The left
spanning tree is traversed starting with the leftmost outgoing edge of the root,
and rightmost outgoing edge for the right tree.
We are now ready to make a tour traversal and assign coordinates to the nodes.
We adopt a compressed version of the algorithm that minimizes the area
of the drawing and start with traversal of the left tree.
In private memory, a $\counter$ for $x$-coordinates is maintained, 
set to~$0$.
Initially, we output $(\source, 0, x)$.
For every edge $e$ that has $\direction = \down$, and
$e.\indeg >1$ or $e$ is the first traversed edge of $a$, we output $(e.b, \counter, x)$.
If $e$ has $\down$ direction but is not the first edge 
of $a$ traversed (in Euler
tour this corresponds to remembering the latest visited edge)
or is the only incoming edge to $b$, then we increment the $\counter$ by 1
and output $(e.b, \counter, x)$. If  $e.\direction$ is set to
$\up$, then we output $(\dummy, 0, x)$. The algorithm for computing 
$y$-coordinates is similar and outputs values with $(e.\parent, \counter, y)$.
Note that access pattern of reads and writes is always the same: read
an edge of the Euler tour and output a tuple of three values.

The output of the above procedure contains tuples of real and dummy values.
We can remove dummy values and bring $x$, $y$ coordinates of each node
together by obliviously sorting tuples by the first field (node name)
such that string $\dummy$ is always greater than any real node name.
The resulting list contains all dummy tuples at the end. Also, each node
has its $x$- and $y$-coordinates adjacent. 
See Figure~\ref{fig:stdom} for an example.

%We construct the tour in such a way that the edges that are not in 
%\textbf{Euler Cycle of a Spanning Tree}: 

\begin{figure}[hbt]
  \ifShort
    \vspace{-15pt}
  \fi	
  \centering
  \begin{subfigure}[b]{0.3\textwidth}
    \centering
    \includegraphics[scale=0.33]{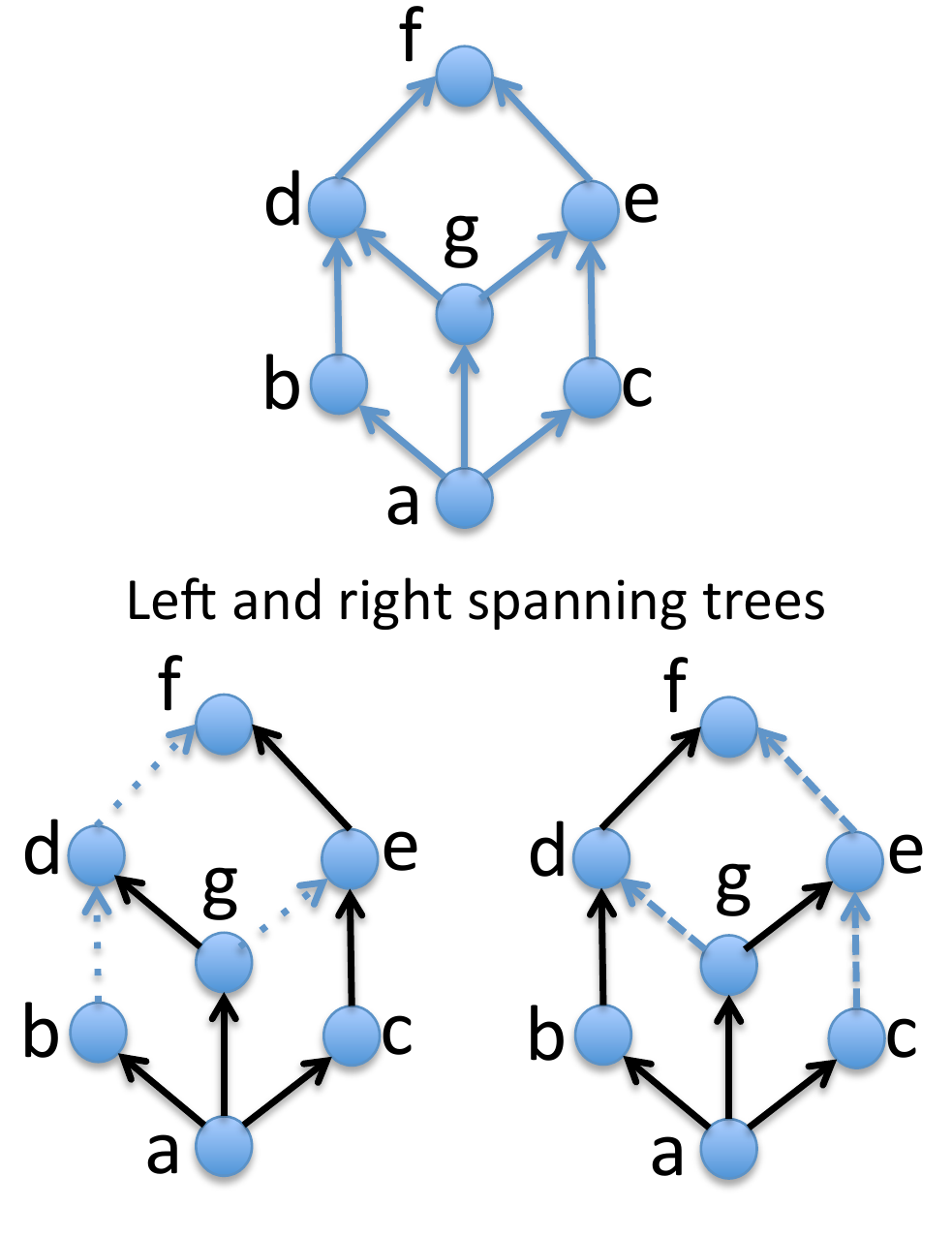}
    \label{fig:stdom1}
    \ifShort
    \vspace{-5pt}
    \fi	
    \caption{}
  \end{subfigure}
  \begin{subfigure}[b]{0.33\textwidth}
    \centering
    \includegraphics[scale=0.35]{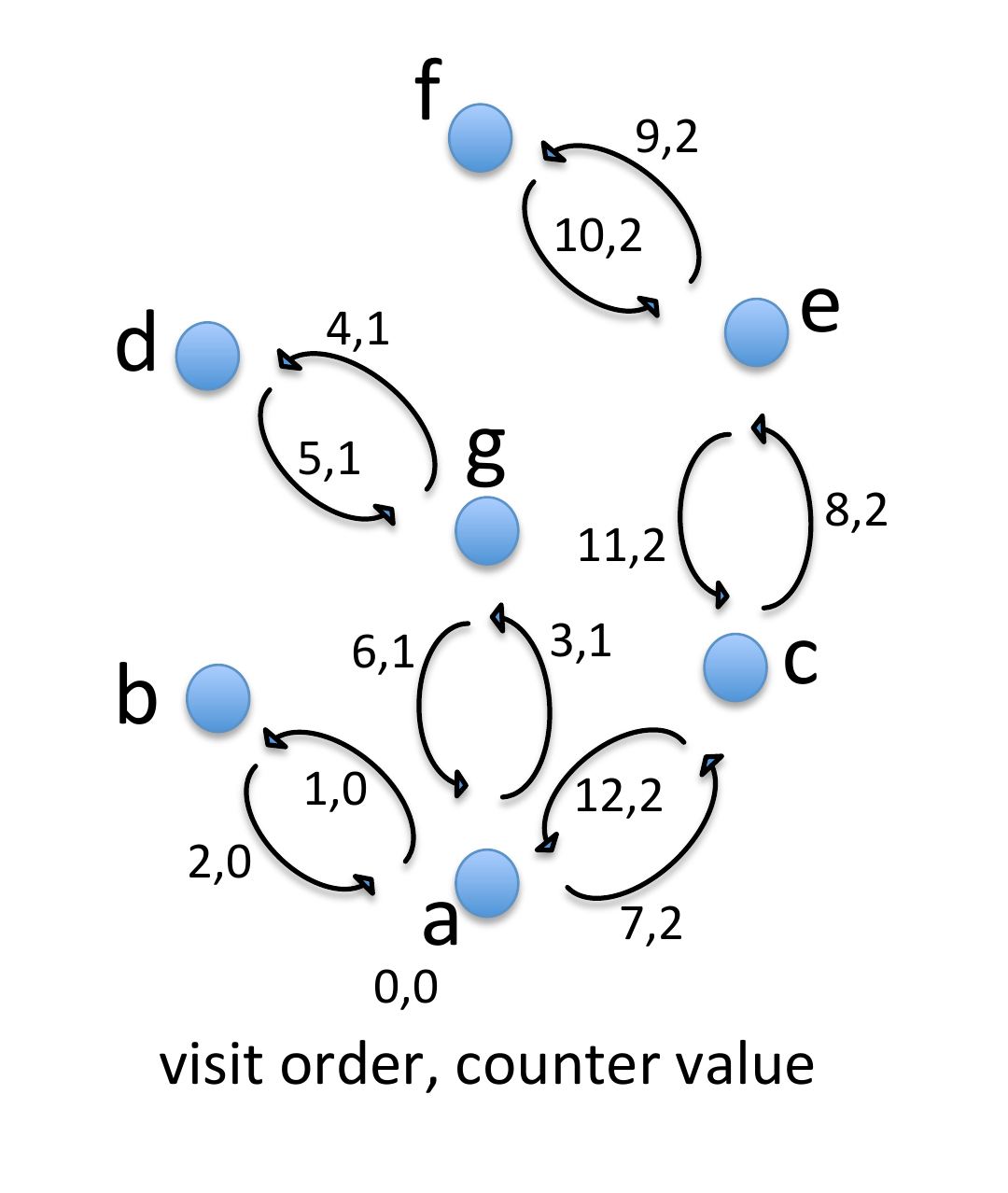}
    \label{fig:stdom2}
    \ifShort
    \vspace{-5pt}
    \fi	
    \caption{}
  \end{subfigure}
  \begin{subfigure}[b]{0.33\textwidth}
    \centering
    \includegraphics[scale=0.45]{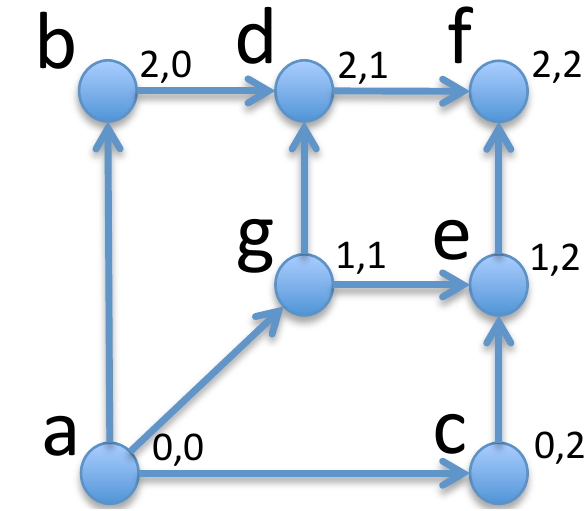}
    \label{fig:stdom3}
    \ifShort
    \vspace{-5pt}
    \fi	
    \caption{}
  \end{subfigure}	
  \ifShort
    \vspace{-5pt}
  \fi			
  \caption[]{(a) A planar acyclic digraph with its left and right spanning trees.
    (b) The order of the visit to each edge of Euler tour of the left spanning tree and the
    counter of $x$ coordinate for child nodes, e.g.,  edge $a$-$g$ is visited third and $g$ is
    assigned $x$ coordinate of 1. (c) The final drawing.}
   \label{fig:stdom}
  \ifShort
    \vspace{-21pt}
  \fi
\end{figure}

%%% Local Variables:
%%% mode: latex
%%% TeX-master: "paper"
%%% End:

\ifFull
\subsection{Treemap Drawings}
\label{sec:treemap}
\else
\paragraph{\bf Treemap Drawings}
\fi
Treemaps are a representation designed for human visualization of
complex tree structures, where arbitrary trees are shown with a 2-d space-filling area.
Here, we present how one can draw a treemap using an algorithm from~\cite{js-tmsat-91}
adapted to
the compressed-scanning model. The original algorithm takes a rectangle area
and splits it vertically into two sections.
The area of the first section is enough to fit the first child, $\child_1$, of the root and
the rest is enough to fit the rest of its children. The next step is to divide the first
section among children of $\child_1$ but this time splitting the area horizontally.
The algorithm continues in the same manner for all decedents of $\child_1$.
Once finished, it proceeds to splitting the second section between second child of the root,
$\child_2$, and the rest of root's children.

\ifFull
\textbf{Input:}
\else
\textit{Input:}
\fi
A tree, $T$, where each node also contains a 
value $\area$ and the
size of a rectangle area, $w\times h$, where $T$ should be drawn.
We build an Euler tour, $C$, from $T$ and add two fields $\psize$ and  $\csize$ to each edge in $C$.

\ifFull
\textbf{Output:}
\else
\textit{Output:}
\fi
Each node is labeled with $(x,y)$ coordinates of the top-left corner, $P$,
and bottom-right  corner, $Q$, of the rectangle area where the node should be placed in.

\ifFull
\textbf{Data-oblivious algorithm:}
\else
\textit{Data-oblivious algorithm:}
\fi
The original algorithm labels the nodes with values $P$ and $Q$
via pre-order traversal of $T$. The algorithm we propose here first goes down the leftmost
subtree computing values $P$, $Q$ and labeling the nodes on the way.
In private memory, it maintains only one copy of the last two assigned
values of $P$ and $Q$, $\prevP$ and $\prevQ$.
It then goes up the tree ``undoing'' all the computations made to $\prevP$ and $\prevQ$.
We do it in such a way that when going up
and reaching some node, we recover its $P$ and $Q$ values as they
were before we visited any of its children or other nodes in its subgraph.
This algorithm fits the traversal of Euler tour $C$ of the tree $T$.
When going down the tree, we read each item $i$ of tour $C$ and output $P,Q$
values corresponding to $i.\child$. However, when going up we cannot retrieve earlier
written $P,Q$ values, since this will not be data-oblivious and we reveal that we are going up,
which consequently reveals the depth of the tree. This is where ``undoing'' computations
when going up on $\prevP$ and $\prevQ$ helps. This is possible since the information
used to compute $P$ and $Q$ is stored twice in $C$: once for edge with $\direction$
set to $\down$ and once for $\up$.
\ifAppendix
The pseudocode of the algorithm appears in the Appendix.
\fi
\ifArxiv
The pseudocode of the algorithm appears in Algorithm~\ref{alg:treemap}.
\begin{algorithm}[h]
\begin{algorithmic}
\State $\out.\node \leftarrow \troot$, $\out.P \leftarrow [0,0]$, $\out.Q \leftarrow [w,h]$
\State \texttt{write} $\out$
%(P, 0, \troot, 0)$, $P[\troot][1] \leftarrow 0$, $Q[\troot][0] \leftarrow max$, $Q[\troot][1] \leftarrow max$
\State \texttt{read} $\pi(0)$ into $e$
\Comment{Get an edge corresponding to the leftmost edge from the root of $T$}
\State $\axis \leftarrow 0$, $\unit \leftarrow w / e.\psize$
\State \Comment{$\prevP$, $\prevQ$, $\unit$, $\axis$ are maintained in private memory, $W$}
\State $\prevP \leftarrow [0,0]$, $\prevQ \leftarrow [w,h]$
%\While {$e.\parent \neq \troot~\&\&~e.\num \neq e.\pchild+1$}
\While {$e.\parent \neq \troot$ and $e.\direction \neq \up$}
   \If {$e.\direction = \down$}
     \State $\prevQ[\axis] \leftarrow \prevP[\axis] + \unit \times e.\csize$
     %\State $\width \leftarrow prevQ[\axis]- prevP[\axis]$
     %\State $out.\node \leftarrow \child$, $\out Q[e.\child]  \leftarrow \prevQ$, $P[e.\child] \leftarrow \prevP$
     \State $\out.\node \leftarrow \child$, $\out.Q \leftarrow \prevQ$, $\out.P \leftarrow \prevP$
     %\State $Q[e.\child][1-\axis]  \leftarrow prevQ[1-\axis]$
     %\State $P[e.\child][1-\axis]  \leftarrow prevP[1-\axis]$
      \If {$e.\cchild = 0$ and $e.\num < e.\pchild}$
         %\Comment leaf node
         \State $\prevP[\axis] \leftarrow \prevQ[\axis]$
         \Comment {Move the top left corner for the next child}
         %\State $e \leftarrow  \getEdge(e.\child, 1)$
      \ElsIf {$e.\cchild > 0$}
         \Comment Go further down the branch
         \State $\unit \leftarrow ({\prevQ[1-\axis] - \prevP[1-\axis]})/{e.\csize}$
         %\State $e \leftarrow \getEdge(e.\child, 0)$
         \State $\axis \leftarrow 1 - \axis$
     \EndIf
   \Else
       \If {$e.\num = e.\pchild$} \Comment{Going up again. Undo previous $P$, $Q$ changes.}
          \State $\mathsf{branch\_size} \leftarrow \unit \times e.\psize$      
          \State $\unit \leftarrow ({\prevQ[1-\axis] - \prevP[1-\axis]})/{e.\psize}$
          %\State $branch\_size \leftarrow \frac{prevQ[\axis] - prevP[\axis]}{e.\csize} \times e.\psize$
          \State $\prevP[\axis] \leftarrow \prevQ[\axis] - \mathsf{branch\_size}$
          \State $\prevP[1-\axis] \leftarrow \prevQ[1-\axis]$
          \State $\axis \leftarrow 1 - \axis$
      %\Else
          %\State We will go down after that so output something fake.
      \EndIf
      \State $\out.\node \leftarrow \dummy$, $\out.Q \leftarrow [0,0]$, $\out.P \leftarrow [0,0]$
      %\State $Q[\dummy] \leftarrow (0,0)$
      %\State $P[\dummy] \leftarrow (0,0)$
      %\State $e \leftarrow  \getEdge(e.\parent, e.\num+1)$
   \EndIf
     \State \texttt{write} $\out$
   \State \texttt{read} $\pi(e.\tag)$ into $e$
\EndWhile
\State Sort all output values by $\node$ field such that $\dummy$ values are in the end.
\end{algorithmic}
\caption{Data-oblivious algorithm to compute a treemap drawing of an arbitrary tree.}    
\label{alg:treemap}
\end{algorithm}
\fi
%
%\ref{alg:treemap}
%Note that at each execution of the while loop one value is read and one is written.
Figure~\ref{fig:treemap} shows an execution of the algorithm on a small tree.

\begin{figure}[ht]
\ifShort
\vspace{-15pt}
\fi	
\centering
\begin{subfigure}[b]{0.5\textwidth}
  \centering
  \ifShort
  \vspace{5pt}
   \fi	
   \begin{subfigure}[b]{0.5\textwidth}
       \includegraphics[scale=0.25]{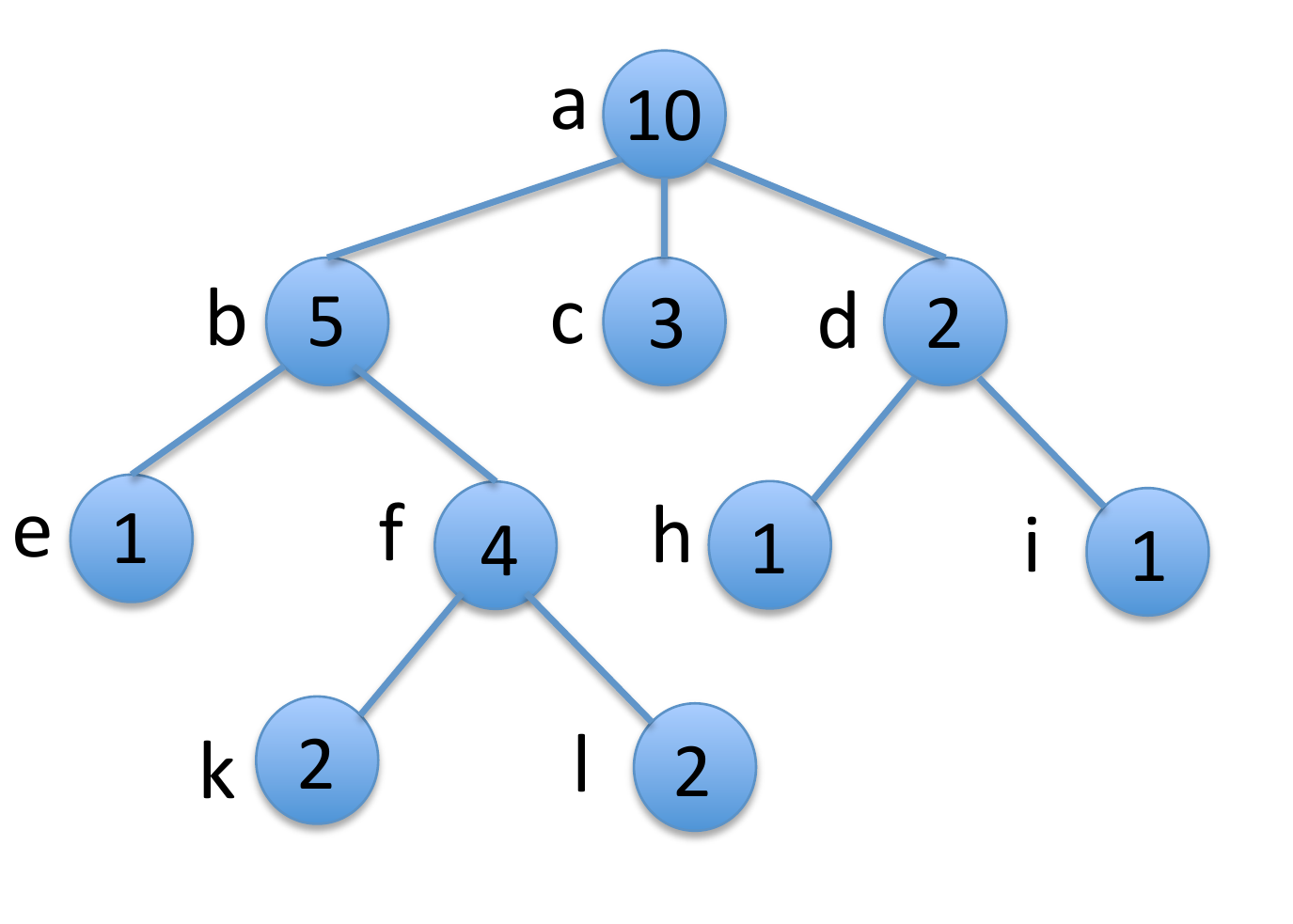}
       \label{fig:treemap1}
       \ifShort
       \vspace{-15pt}
       \fi	
       \caption{}
    \end{subfigure}
    \begin{subfigure}[b]{0.5\textwidth}
       \includegraphics[scale=0.33]{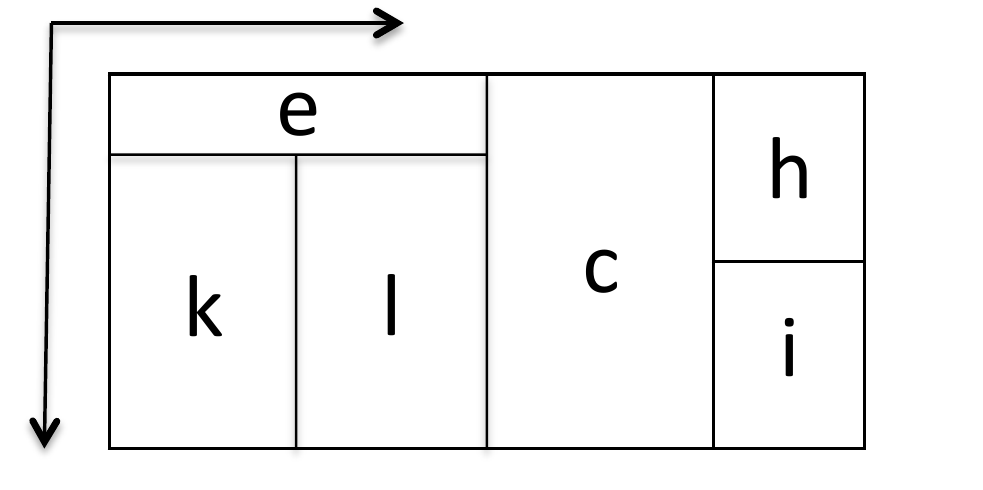}
       \label{fig:treemap2}
       \ifShort
       \vspace{-15pt}
       \fi
       \caption{}
    \end{subfigure}
\end{subfigure}%	
\begin{subfigure}[b]{0.5\textwidth}
  \includegraphics[scale=0.4]{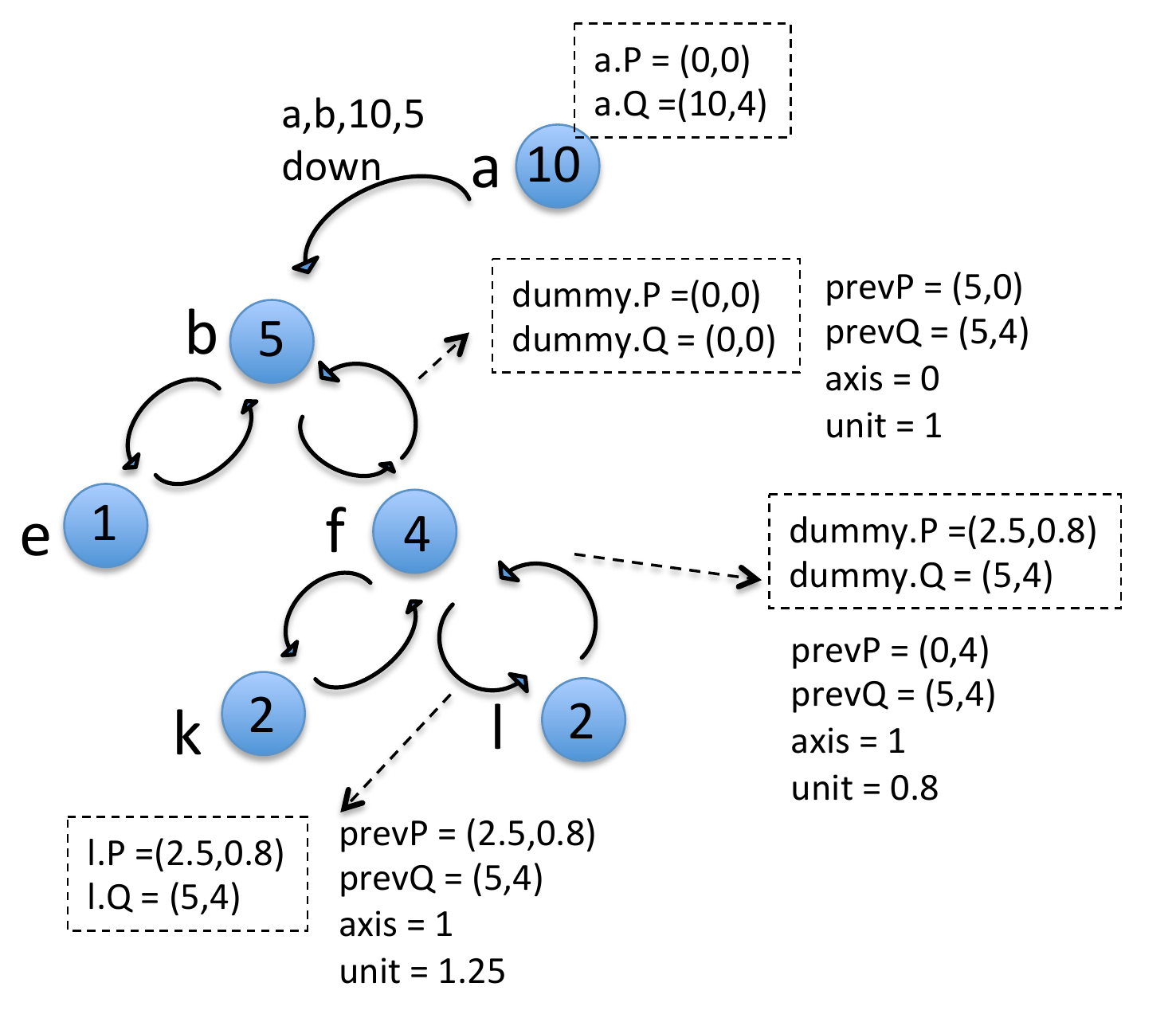}
  \label{fig:treemap3}
  \ifShort
  \vspace{-15pt}
  \fi	
  \caption{}
\end{subfigure}
\ifShort
\vspace{-15pt}
\fi	
\caption[]{Treemap graph drawing. (a) The original graph. (b) The final drawing.
\ifFull
(c) Execution of algorithm in Section~\ref{sec:treemap} on the graph in (a) on a 10$\times$4 rectangle area.
\else
(c) Execution of an oblivious treemap drawing algorithm on the graph in (a) on a 10$\times$4 rectangle area.
\fi
The values in dashed rectangles are written for every edge and are never accessed. Variables $\prevP$,
$\prevQ$, $\axis$ and $\unit$ are kept in memory\ifAppendix(see the pseudocode in the Appendix)\fi.}
\label{fig:treemap}
\ifShort
\vspace{-18pt}
\fi
\end{figure}

%%% Local Variables:
%%% mode: latex
%%% TeX-master: "paper"
%%% End:

\ifFull
\subsection{Series-Parallel Graphs}
\else
\paragraph{\bf Series-Parallel Graphs}
\fi
A series-parallel (SP) graph is a directed acyclic graph that can be decomposed recursively
into a combination of series-parallel digraphs.
The base case of such a graph is a simple directed edge. A series composition consists of two
series-parallel graphs $G_1$ and $G_2$ where
the sink of $G_1$ is identified with the source of $G_2$. A parallel composition of two
series-parallel graphs $G_1$ and $G_2$ is the digraph where source of $G_1$ is identified
with the source of $G_2$ and similar for their sink nodes.
For example, consider the series-parallel digraph shown in Figure~\ref{fig:spq1}.
\ifFull
The subgraph $S'$ induced by its edges
$c$-$d$ and $d$-$a$ is a series composition of graphs $c$-$d$ and $d$-$a$.
While $S'$ and edge $c$-$a$ is a parallel composition.
\fi

An SP graph $G$ can be represented with a binary tree (SPQ tree) with
three types of nodes, $S$, $P$ and $Q$.
$Q$ nodes are leaves of the tree
and correspond to individual edges of $G$. An internal node
is of type $P$ if it is a parallel composition of the children digraphs.
If a node corresponds to a series composition it is called $S$ node.
Here, we use a right-pushed embedding of $G$ such that a transitive edge
in parallel composition is always embedded on the right. (Figure~\ref{fig:spq2}
shows the SPQ tree of the graph of Figure~\ref{fig:spq1}.) 

\ifFull
\textbf{Original  $\Delta$-drawing algorithm:}
We adopt the $\Delta$-drawing algorithm from~\cite{bcdtt-hdspd-94}.
This algorithm recursively produces a drawing of $G$
inside a bounding triangle $\triangle(G)$ which is isosceles and right-angled.
In the drawing of a series composition, the two bounding triangles,
$\triangle(G_1)$ and $\triangle(G_2)$,
are placed one on top of another and, hence, produce a bounding triangle big enough to
fit them both. For a parallel composition, $\triangle(G_2)$ is placed on the right
of  $\triangle(G_1)$ and a larger triangle is drawn to fit this parallel composition.
The algorithm works by traversing the SPQ tree and identifying the size of the bounding
triangles of each node. The length of the hypotenuse, $b$, is enough to store
this information. Each $Q$ node is assigned a triangle with $b=2$, while for
series and parallel nodes $b$ is the sum of $b$ values at the children nodes.
When traversing the tree we also compute value $b'$, which makes sure that in a
drawing of a parallel graph $G$ the edge that goes from the source of $G$ to $G_1$,
the left subgraph of the composition,
does not intersect the drawing of $G_2$.
This value $b'$ for a $Q$ node is simply $b$,
for $S$ node it is $b'(\triangle(G_1))$ and for $P$ node it is $b'(\triangle(G_1)) + b'(\triangle(G_2))$. Note that
for a parallel node it is the sum of $b'$ values of both graphs since we want to make sure
that if subgraph $G$ is later a part of a parallel composition no node will intersect either
$G_1$ or $G_2$. If $G$ is a transitive edge then $b'(\triangle(G)) = b(\triangle(G))$.
(See Figure~\ref{fig:spq2}.)

Once $b$ and $b'$ are computed for every node, i.e., every bounding triangle,
the algorithm computes the $(x,y)$ value of the bottom node of each triangle.
The outer most triangle is positioned at $(0,0)$.
Given coordinates $(x,y)$ of a triangle corresponding to the $S$ node
with hypotenuse of size $b$ and children with hypotenuses
$b_1$ and $b_2$, we place the first triangle
at $(x,y)$ and second at $(x,y+b_1)$.
Given coordinates $(x,y)$ of a triangle corresponding to a parallel node,
we place the first triangle
at $(x-0.5b_2,y+0.5b_2)$ and second at $(x,y+b'_1)$.
Given that we know the coordinates of each triangle, we can now assign coordinates
for individual nodes. The source of $G$ is placed at (0,0) and sink is placed at $(0, b(\triangle(G)))$.
We then look at each $\node$ in $G$ and place it at $(x, y+b(\triangle(G_\node)))$
where $G_\node$ is a subgraph and $\node$ is its sink.  (See Figure~\ref{fig:spq3} for an example.)
%\olya{may be center triangle for parallel composition}

%%% Local Variables: 
%%% mode: latex
%%% TeX-master: "paper"
%%% End: 

We are now ready to explain the algorithm in compressed-scanning model.
\fi

\ifFull
\textbf{Input:}
\else
\textit{Input:}
\fi
SPQ tree from a right-pushed embedding of SP digraph $G$ and
nodes that are annotated as $S$, $P$ or $Q$. 
We convert this tree into an Euler tour
with addition of parent and child node type:
$\pspqtype$ and $\cspqtype$ which are either $S$, $P$ or $Q$.

\ifFull
\textbf{Data-oblivious algorithm}:
\else
\textit{Data-oblivious algorithm}:
\fi
\ifShort
We adopt the $\Delta$-drawing algorithm from~\cite{bcdtt-hdspd-94}.
\ifAppendix 
We refer the reader to a brief description of this algorithm
in Section~\ref{sec:delta_alg}. \fi
The algorithm
makes several computations over the tree to annotate the nodes of 
the SPQ tree
with values $b$, $b'$ and $(x,y)$.
Here $b$ is the length of the hypotenuse of the
bounding triangle of a node and $b'$ stores a distance
between parallel drawings to make sure they do not intersect.
\else
The above algorithm
makes several computations over the tree to annotate the nodes of 
the SPQ tree
with values $b$, $b'$ and $(x,y)$.
\fi
%coordinates of bottom node of a bounding triangle corresponding to the node.
Value $b$ can easily be computed in the same manner as we computed
the subgraph size
\ifFull
in Section~\ref{sec:euler_ssize}.
\else
earlier in this section.
\fi
Value $b'$ of the left child is added
only for parents of $P$ nodes. When an Euler tour is going up the tree we can always check
the value of $\pspqtype$ to know if $b'$ of the left subgraph should be carried to the right one.
Coordinates $(x,y)$ for each node are computed from a small modification
of the Euler tour: the left child needs know value $b(\triangle(G_2))$ and
right child needs to know $b'(\triangle(G_1))$. It is easy to do this by always reading the next edge and remembering
the last edge.
%For a parent $p$ and its children $c_\left$ and $c_\right$ which correspond to subgraphs
% we need to store value
%$We only need
%to bring value of $b(\triangle(G_2))$ to the left  and $b'(\triangle(G_1))$ to the parent. Since these
%values are stored at edges that follow each other in Euler cycle it is trivial to propagate them
%to each other.

Given that we know the coordinates of each triangle, we can now assign coordinates
for individual nodes. Recall that every leaf node of SPQ tree is associated with an edge
while an internal node is either a DAG or a path of edges in the subtree rooted at this node.
Hence, we can associate each internal node of SPQ tree, and edges in the corresponding
Euler tour, with two nodes of the series-parallel graph
that correspond to the source and the sink of the underlying subgraphs.
Given a $\parent$ node of SPQ tree and
source and sink nodes of its children, $c_1$ and $c_2$, if $c_1^\sink$ and $c_2^\source$ are equal
then node $c_1^\sink$ is placed at $(c_1.x, c_1.y+c_1.b)$. Otherwise, we output a dummy.

%Again all we need to do is to propagate the children information of sources and sinks to the parent.
%We then do a pass over internal nodes assigning the coordinates.

%
\begin{figure}[t]
\centering
\begin{subfigure}[b]{0.15\textwidth}
  \centering
  \includegraphics[scale=0.27]{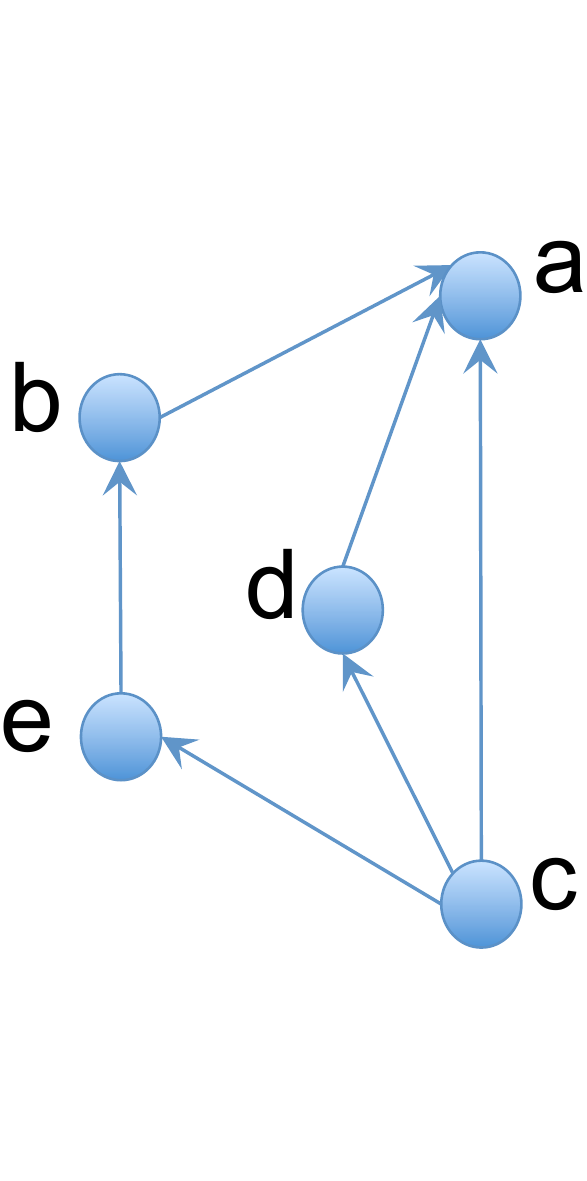}
  \caption{}
  \label{fig:spq1}               
\end{subfigure}
\begin{subfigure}[b]{0.41\textwidth}
  \centering
  \includegraphics[scale=0.41]{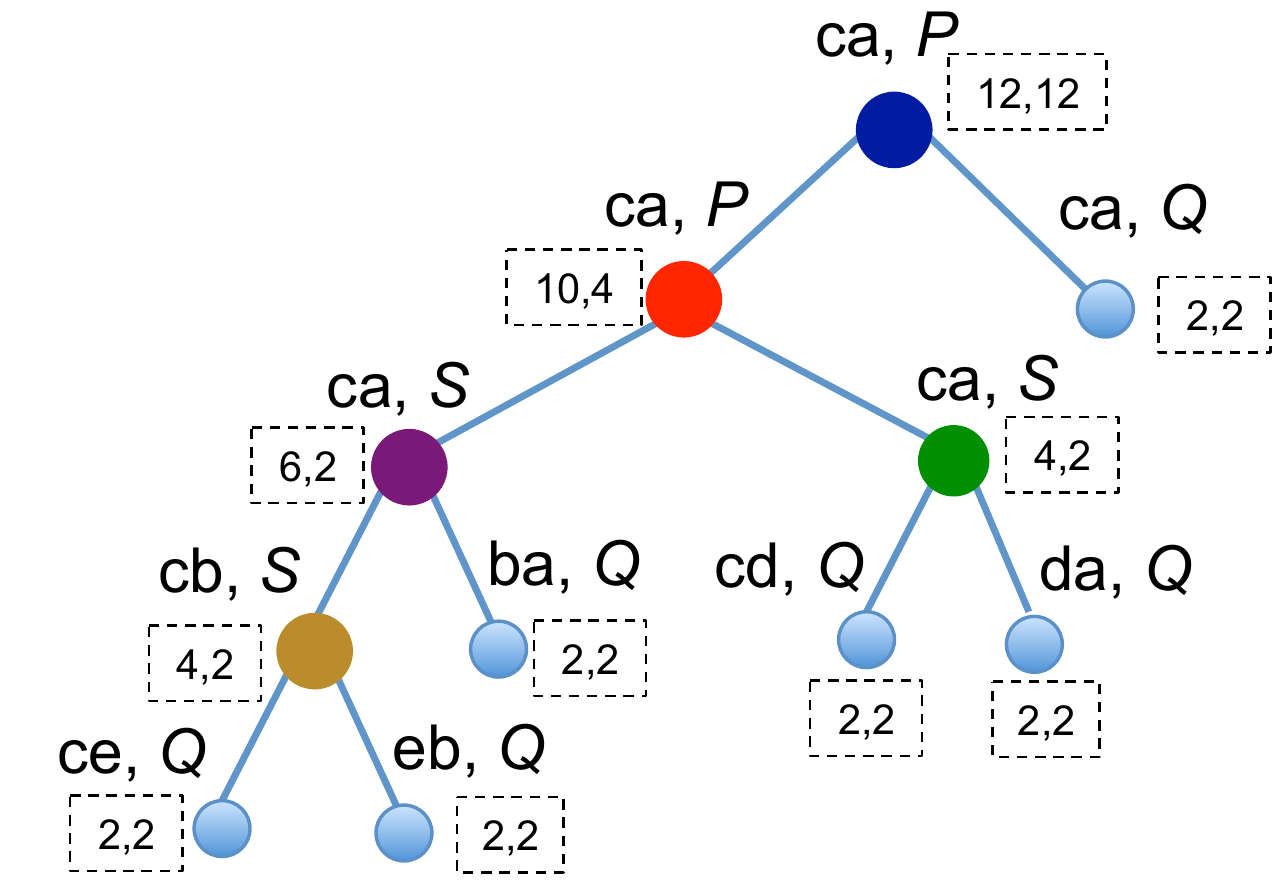}
  \caption{}
  \label{fig:spq2}
\end{subfigure}
\begin{subfigure}[b]{0.4\textwidth}
  \centering
 \includegraphics[scale=0.27]{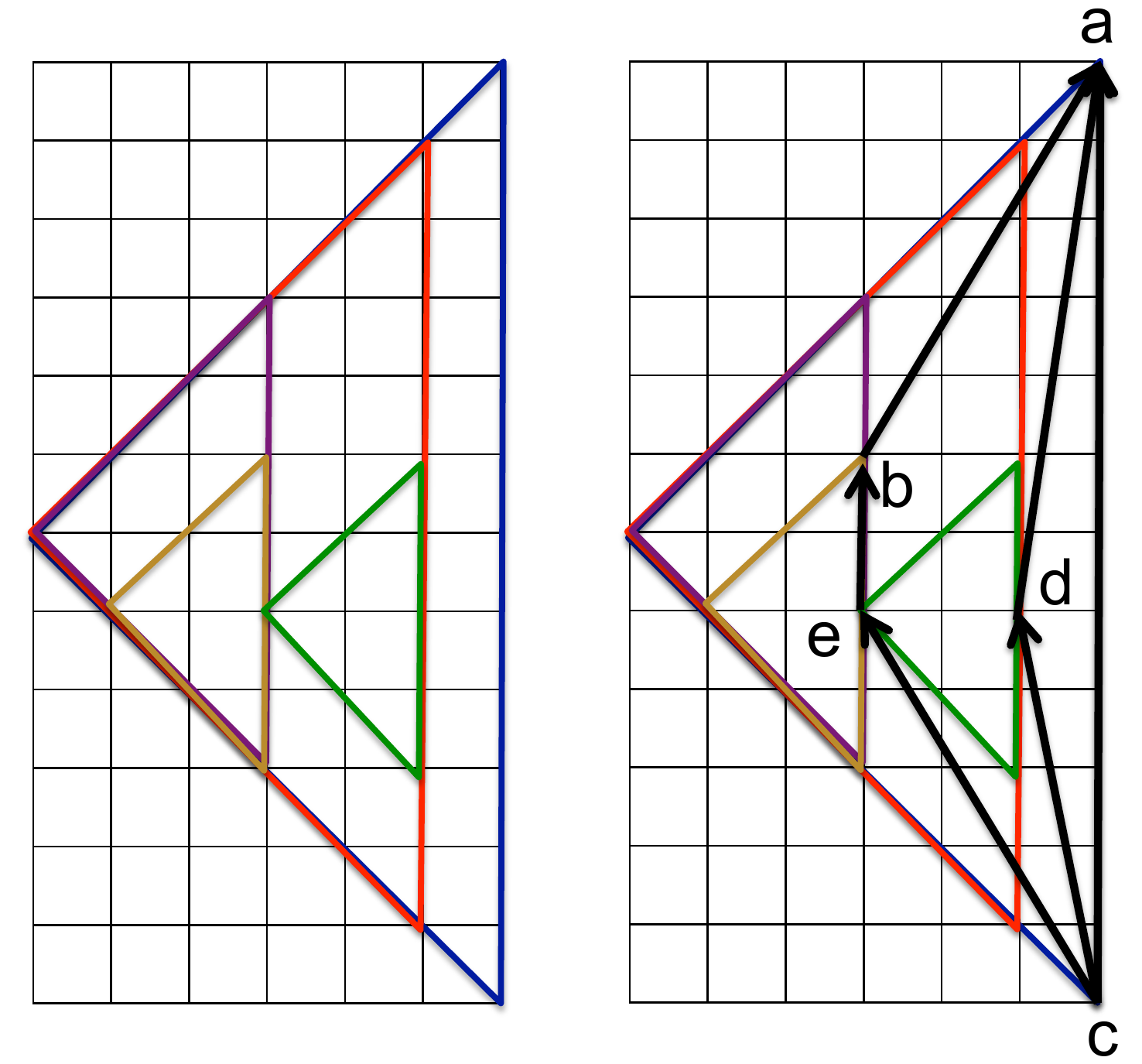}
  \caption{}
  \label{fig:spq3}
\end{subfigure}
\hspace{-15pt}
\ifShort
  \vspace{-5pt}
\fi		
\caption[]{(a) A series-parallel graph. (b) SPQ tree representation annotated with values $b$ and $b'$
(dashed rectangles).
(c) The final drawing.}
\label{fig:spq}
\ifShort
  \vspace{-20pt}
\fi
\end{figure}

\comment{
Given that we can only traverse a tree using
an Euler cycle we note that
Our algorithm in compressed-scanning model The main idea of the above algorithm when going up the tree is to ``undo'' the rectangle
assignments that were done when going down. This helps us to keep track of the $x$ and $y$ coordinates
of the parent. Note that at each pass of the while loop we read an edge and output two values.

The algorithm to find a drawing for tree map in the area $(0,0)$ and $(max, max)$ is:

\textbf{Tree representation}
Each edge is given as structure with following fields:
\begin{itemize}
\item Parent node, $\parent$
\item Child node, $\child$
\item Child num, $\num$
\item Parent's number of children $\pchild$
\item Child's number of children $\cchild$
\item Parent's size, $\psize$
\item Child's size, $\csize$
\end{itemize}

where $\parent$ and $\num$ are used to reference an edge.

To create an Euler tour of the tree we create two copies of each edge with
an additional field $\direction$ which is equal to either $\up$ or $\down$.
If edge $e$ has an $\up$ direction then we treat it as the rightmost child
coming from this node and hence $e.\num \leftarrow e.\pchild+1$.

The Euler tour then is defined as
\begin{algorithmic}
\State $e \leftarrow \getEdge(\troot, 0)$
\While {$e.\parent \neq \troot~\&\&~e.\num \neq e.\pchild+1$}
   \If {$e.\direction = \down$}
      \If {$e.\cchild = 0$}
         \State $e \leftarrow  \getEdge(e.\child, 1)$
      \Else
         \State $e \leftarrow \getEdge(e.\child, 0)$
     \EndIf
   \Else
          \State $e \leftarrow  \getEdge(e.\parent, e.\num+1)$
   \EndIf
\EndWhile
\end{algorithmic}
}

%%% Local Variables: 
%%% mode: latex
%%% TeX-master: "paper"
%%% End: 

\ifFull
\subsection{Drawing Trees with Bounding Rectangles}
\else
\paragraph{\bf Drawing Trees with Bounding Rectangles}
\fi
\label{sec:bound_rect}
In this section, we present an algorithm that draws a binary tree $T$
using a bounding rectangle approach from~\cite{cdtt-dgdts-95}, adapted
to the compress-scanning model. 
This algorithm is
slightly different from the approaches we took in previous sections
and involves a more complex way of converting it to fit
data-oblivious mode. The original algorithm recursively assigns bounding
rectangles to nodes of the tree. A leaf node is assigned a rectangle of size 2$\times$1,
while an internal node is assigned a rectangle that fits the bounding rectangles
of its children. Each rectangle is represented by its $\width$, $\height$
and $(x,y)$ coordinate of the left top corner, which we refer to as reference point $\refpoint$.
For leaves, the $\width$ is $2$, and $\width$ of internal nodes
is the sum of the $\width$ of its children.
The $\height$ of the rectangle is defined as $1+\max_i \child_i.\height$.
The bounding rectangle of the
root node is assigned to $\refpoint$ of $(0, \mathsf{tree\_height})$.
The $\refpoint$ of $i$th child of node $p$ is assigned
to $(p.\refpoint.x + \sum_{j < i} \child_j.\width, p.\height)$.
Each leaf node $l$ is then assigned to a coordinate
$(x,y) = (l.\refpoint.x + l.\width/2,  l.\refpoint.y)$.
An internal node is placed between its children, hence,
a node $l$ with children $\child_i$ $(i = 1,2)$ is assigned to a coordinate
$(\sum \child_i.x/i, l.\refpoint.y)$.

\ifFull
\textbf{Data-oblivious algorithm:}
\else
\textit{Data-oblivious algorithm:}
\fi
Here, an Euler tour over $T$ is only sufficient to compute $\width$,
$\level$ and $\refpoint$ values. Computing $(x,y)$ coordinate of internal nodes
involves knowing the left and the right coordinate of its children which can only be computed
when subgraphs of both children are processed. If we use Euler tour
traversal we need to store the coordinates computed in the left subtree while processing
the right subtree. Given that we only allow for small private workspace, we cannot store them
internally, since we may need to store coordinates for several levels of the tree
which in the worst case can be linear in the number of nodes in the tree.
Indeed, in our previously described methods, we only store a constant number
of values when traversing a tour.
Therefore, in this section we propose a different technique
that is based on a dashed-solid representation. This representation allows us
to store only $\log(n)$ coordinates in the worst case, which
fits our compressed-scanning model.
This algorithm is an example that one can carry out more involved computations
on graphs with sublinear private space.

The \emph{dashed-solid} representation of the tree splits edges into dashed and solid.
An edge $\parent$-$\child_i$ is solid if $\parent.\ssize/2 < \child_i.\ssize$.
Otherwise, an edge is dashed.
If $\ssize$ of the children is the same, the right edge is solid and the left one is dashed.
The invariant of this representation is that a parent node has a solid
edge to only one of its children and the corresponding subtree is equal or larger than
the subtree of the sibling (if one exists). The main property of the dashed-solid assignment is that
the length of the longest dashed path is $O(\log n)$.
Note that given that we can compute $\ssize$ easily
\ifFull
(Section \ref{sec:euler_ssize}),
\else
\fi
assigning edges to dashed or solid is trivial using another Euler tour traversal.

Given a dashed-solid representation, we compute $(x,y)$ coordinates
by creating a tour around the tree where edges are accessed in a specific order.
The traversal first goes down only via solid edges,
when a leaf is reached we traverse up until a node with a dashed edge is reached.
We then follow this subtree also traversing its solid edges first.
To construct this traversal one needs to store with every node which of its children is
solid.
The coordinates $(x,y)$ are computed as follows.
%For every solid edge traversed down we assign coordinates for some dummy values.
We follow a solid edge path until a leaf $l$ is reached and then the leaf node
is assigned to coordinate
$(l.\refpoint.x + l.\width/2, 0)$. We remember this coordinate as $s$
in the private memory.
When going up, if the parent node $p$ does not have any other children,
then we assign it
to $(s.x, s.y)$ and continue traversing up the tree.
If the node $p$ indeed has a dashed edge, then we traverse the subtree of the dashed edge.
Once this traversal is finished, $p$ is assigned to $((s.x + d.x)/2, 1+\max (s.y,d.y))$ where
$d$ is the coordinate of the child from the dashed edge. We now discard
old value of $s$ and $d$ and assign $s$ to just computed coordinate of $p$
and keep going up. The same process is applied when we go down the subtree
from the dashed edge, except we need to remember $s$ coordinate
for every subtree rooted at a solid edge we encounter going down.
Since the longest path starting from a  dashed edge is
bounded by $O(\log n)$ we can easily fit a stack of corresponding $s$ coordinates
into our small workspace~$W$.
\ifAppendix
See Figure~\ref{fig:bound_rect} in the Appendix for an example of the drawing.
\else
\ifArxiv
(See Figure~\ref{fig:bound_rect} for an example of the drawing.)
\begin{figure}[ht]
	\centering
         \includegraphics[scale=0.5]{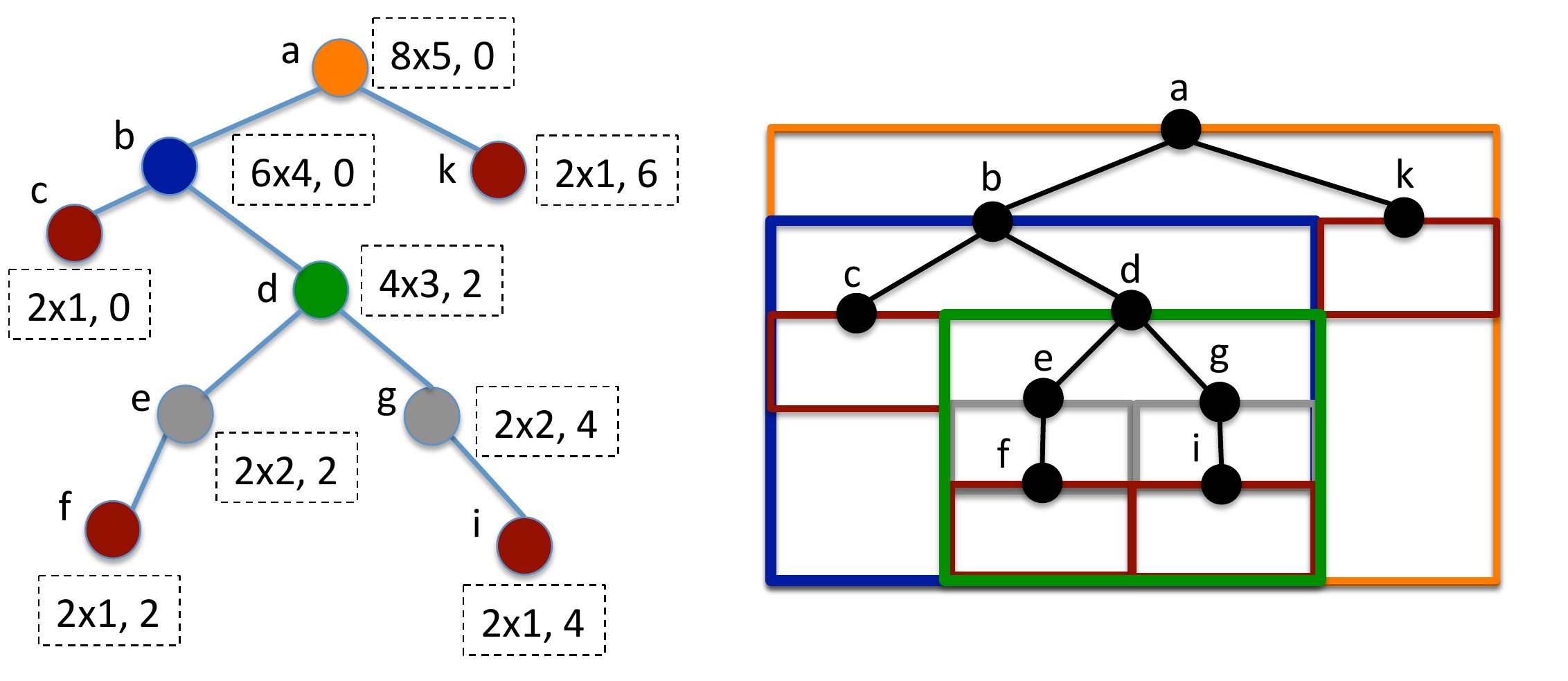}
        	\caption[]{Example of the algorithm in Section~\ref{sec:bound_rect}. Tree annotated with the
	width, height and $x$ coordinate of the top left corner of the bounding rectangle ($\refpoint$). Figure on the right
	is the resulting drawing.}
	\label{fig:bound_rect}
\end{figure}
\fi
\fi
We note that this
algorithm can be extended to n-ary trees if we store dashed edges of a single
node in a balanced binary tree (see~\cite{cdtt-dgdts-95} for details).
\comment{
more than one $x$ coordinate.
edge
$x_s$ 
We repeat this process for the rest of the traversal: when going down always pick
the solid edge to traverse down, when going up remember the first follow solid path,
assign the leaf to its coordinate, 
it is traversed we
bring back the $(x)$

We remember this value and go up, assigning the nodes on the solid path to this coordinate
(recall that the parent node

$(x,y)$ which are centered over Now when going up the solid path the nodes we assign the nodes 
leaf is reached the

The tree is traversed from the root down to the leaf following only solid edges.
We then go up, similar to Euler cycle the edges are duplicated indicating the direction,
until we access the first node that has a dashed edge to its child.

When going down the solid edge is always accessed first,

%\subsection{Dash-Solid Tree Representation}
%in the tree is assigned a bounding rectangle such that the 
%Given that we know the size of each subgraph we can represent the tree
%such that any recursive traversal will involve $O(\log n)$ steps.
%We adopt the dash-solid tree representation that has the following properties: ...

%An edge between two nodes $a$ and $b$ is solid if
%$a.\size < b.\size/2$, otherwise it is dashed.

Run the algorithm for finding subgraph size and reference points
in the compressed-scanning model. One can also easily run Euler tour to
determine whether an edge is solid or dashed.
Each node now contains an additional field, $node.is\_dashed$, which is set to true
is the edge between the parent and the node is dashed.

%Run the algorithm for finding subgraph size and reference points
%in the compressed-scanning model. One can also easily run Euler tour to
%determine whether an edge is solid or dashed.
%Each node now contains an additional field, $node.is\_dashed$, which is set to true
%is the edge between the parent and the node is dashed.

%We now will record with every node which of its children has the solid edge.
%\olya{Need to extend so that the parent stores names of both children}
%\begin{enumerate}
%\item Duplicate every vertex into left and right.
%\item When going back from child to the parent, remember the dash/solid
%value of the child, when accessing the parent node, assign the edge
%information.
%\end{enumerate}

\textbf{Traversal} 
Duplicate every vertex as left and right for Euler traversal.
\begin{enumerate}
\item Going down:
\begin{enumerate}
\item if the node is not a leaf pick the child with the solid and edge and follow it.
\item if the node is a leaf assign $x$ coordinate as $reference\_point + width/2$ and
pick the backward edge to start going up.
\item if the edge is dashed, remember in private memory the $x$ coordinate
of the solid edge of the sibling.
%\item if going on dashed path - remember in private memory the x coordina
\end{enumerate}
\item Going up:
\begin{enumerate}
\item If going up the dashed edge, set this node's $x$ coordinate to the middle
of the $x$ coordinate from where it just came back and other's child's coordinate
from private memory. Remove the last value from private memory.
\item If there is a dashed edge from the top node, follow it.
\end{enumerate}
\end{enumerate}

\subsection{Drawing Trees with Bounding Rectangles}

\begin{enumerate}
\item Euler tour to find reference and width values for every node
\item Mark edges as dash or solid
\item Create an ordering of the edges
\item Walk along this ordering, setting up the $x$ coordinate
\end{enumerate}
\textbf{Traversal} Give order in which we will traverse the edges, so that we
first traverse solid edges and then dashed. Bottom up traversal: beginning
from last solid edge on the longest solid path and going up. Recursion
only on paths starting with a dashed edge.

\textbf{Creating an order}
Start at the root node, $node \leftarrow root$:
\begin{algorithmic}
\If {not $node.hasChild$}
  \Comment reached the leaf
\Else
\If {$node.\issolid$}
  \Comment{if edge to the child is solid}
  \State mark it with $solidNum$
  \State $solidNum \leftarrow solidNum + 1$
\Else
  \Comment if edge to child is dashed
  \State mark it $solidNum + 1 + 2\times (node.\size- child.\size-1)$
\EndIf
\EndIf
\end{algorithmic}
}

%%% Local Variables: 
%%% mode: latex
%%% TeX-master: "paper"
%%% End: 

%
% \subsection{New Triangle Algorithm}
% 
%
\ifFull
\subsection{Summary} 
\else
\paragraph{\bf Summary} 
\fi
The following theorem summarizes the results of this section.
\begin{theorem}
  The drawing algorithms described in this section are data-oblivious
  according to Definition~\ref{def:obl} and run in time $O(n \log n)$.
  Also, the private working space has size $O(\log n)$ for the
  bounding-rectangle tree-drawing algorithm (where $n$ is the size of
  the tree) and has size $O(1)$ for the other algorithms.
\end{theorem}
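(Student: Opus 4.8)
The plan is to obtain the theorem as a direct corollary of Theorem~\ref{thm:compressed-oblivious}, since each algorithm in this section was already cast in the compressed-scanning model. The work therefore reduces to checking, for all five algorithms, the two hypotheses of that theorem — that the number of rounds and the number of elements written per round depend only on the input size $n$ — and then to separately tracking the working-space usage $k$ of each algorithm so as to read off the $O(k)$ space bound of the simulation.

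First I would observe that every algorithm here consists of a constant number of scans of an Euler-tour (or edge-set) representation, interleaved with a constant number of oblivious sorts: building the tour, annotating edges (e.g. $\leftS/\rightS$, $\issolid$, $\pspqtype/\cspqtype$), a size/counter pass computing $\ssize$ or the $b,b'$ values, a single coordinate-assigning traversal, and a final sort to collate per-node values. Each such scan follows the $\next$ pointers and reads all $2|E|$ tour items exactly once, and — the key point for obliviousness — writes exactly one item, real or $\dummy$, per read. Hence the number of rounds is $O(1)$ and the number of writes in each round is exactly $2|E|$, which for a tree equals $2(n-1)$ and for the digraph inputs equals twice the (already revealed) number of edges. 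In every case this is a fixed function of the input size, independent of the shape, depth, or degree sequence of the tree, which is precisely the hypothesis required to invoke Theorem~\ref{thm:compressed-oblivious}.

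Next I would bound the working space scan by scan. For the Euler-tour traversal, the subtree-size and depth computation, the dominance drawing, the treemap algorithm, and the series-parallel $\Delta$-drawing, the client keeps only a constant number of scalars during a scan (for instance $\totalssize$; the $x$- and $y$- $\counter$; the tuple $\prevP,\prevQ,\axis,\unit$; or the single ``last edge'' buffer used to carry $b$ and $b'$ between siblings), so $k=O(1)$ for these. The sole exception is the bounding-rectangle algorithm, where, while descending into a dashed subtree, the client must retain the coordinate $s$ of the solid subtree for every dashed edge still open on the active path. Because a dashed edge satisfies $\child_i.\ssize \le \parent.\ssize/2$, traversing one at least halves the current subtree size, so any root-to-leaf path carries $O(\log n)$ dashed edges; the stack of deferred $s$-coordinates therefore has depth $O(\log n)$, giving $k=O(\log n)$ for this algorithm.

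Finally, I would apply Theorem~\ref{thm:compressed-oblivious} with $O(1)$ rounds: each round is dominated by the $O(n\log n)$ shuffle/oblivious-sort step, so the simulation runs in $O(n\log n)$ time and uses working space $O(k)$, namely $O(\log n)$ for the bounding-rectangle algorithm and $O(1)$ for the rest, and it is data-oblivious by Definition~\ref{def:obl}. The main obstacle I anticipate is the space analysis of the bounding-rectangle algorithm: one must argue carefully that the stack of pending solid-subtree coordinates is in bijection with the dashed edges currently open on the traversal path, and that this count is $O(\log n)$ via the halving property of the dashed--solid decomposition. Verifying the round-count and write-count hypotheses for the remaining algorithms is then a routine per-algorithm checklist against Theorem~\ref{thm:compressed-oblivious}.
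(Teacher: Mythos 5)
Your proposal is correct and follows essentially the same route as the paper: the theorem is obtained as a corollary of Theorem~\ref{thm:compressed-oblivious} by noting that each algorithm is a constant number of single-round compressed scans (Euler tours plus oblivious sorts) writing one real-or-dummy item per read, with the $O(\log n)$ working space for the bounding-rectangle algorithm coming from the $O(\log n)$ bound on dashed paths. The paper's own proof is much terser, and your per-algorithm accounting of rounds, write counts, and the stack of deferred coordinates simply makes explicit what the paper leaves implicit.
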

 
Since we have given algorithms in the compressed-scanning model, the
theorem follows from Theorem~\ref{thm:compressed-oblivious}. All of
the algorithms perform a constant number of Euler tours. In the beginning of
Section~\ref{sec:gd_algs} we showed that an Euler tour can be
implemented with a single-round compressed scan, where, from the
server's perspective, the items associated with the edges of the tour
are accessed in random order and only once.

Our algorithms hide the combinatorial structure and layout of the
graphs, while the number of edges and vertices is revealed. One can
achieve even stronger privacy if dummy edges and nodes
are added. From the point of view of the model, the input $S$ is a
larger set of elements and the running time of algorithm $A$ increases
as well.

%%% Local Variables: 
%%% mode: latex
%%% TeX-master: "paper"
%%% End: 

\section{Conclusions and Open Problems}
\label{sec:conclusions}
We introduce the compressed-scanning technique for designing
data-oblivious algorithms in a cloud-computing
environment.  
\ifFull
In a nutshell, this technique involves specifying an
algorithm as a series of scans where data is processed using a small
working storage.  Using this technique, we show how to implement
classic drawing algorithms for trees, series-parallel graphs, and
planar $st$-digraphs (and variations of these algorithms) so that the
client needs only a small amount of working storage (constant or
logarithmic in the size of the data set) and can fully protect the
privacy of the graph and of its layout, beyond what can be
accomplished by encryption alone.
\else
We show how to use this technique to develop
data-oblivious variations of several classic graph drawing algorithms.
\fi
Open problems include finding other applications of this technique
and developing alternative data-oblivious approaches for graph drawing. For
example, it is not known how to compute in a data-oblivious way
$st$ orientations and $st$-numberings, used for visibility representations
of planar graphs~\cite{tt-uavrp-86}, 
or canonical orderings \cite{fpp-hdpgg-90}, used for planar straight-line drawings.

%%% Local Variables: 
%%% mode: latex
%%% TeX-master: "paper"
%%% End: 

\ifFull
\section*{Acknowledgments}
  \acks
\fi

% -------------------------------------------------------------------------
%{\raggedright
\bibliographystyle{abbrv}
% \bibliography{cuckoo,cuckoo2,cuckoo3,crypto,extra,geom,goodrich}
\bibliography{refs,crypto,geom,goodrich}
%}

\ifFull
\ifArxiv
\else
\clearpage
\input{appendix}
\fi
\fi

\end{document}